\crefname{equation}{}{}
\crefname{proposition}{Proposition}{Propositions}
\newcommand{\real}{\mathbb{R}}
\newcommand{\until}[1]{[#1]}
\newcommand{\map}[3]{#1:#2 \rightarrow #3}
\newcommand{\longthmtitle}[1]{\mbox{}{\textit{(#1):}}}
\newcommand{\setdefb}[2]{\big\{#1 \; | \; #2\big\}}
\newcommand*{\SetSuchThat}[1][]{} 
\newcommand*{\MvertSets}{%
    \renewcommand*\SetSuchThat[1][]{%
        \mathclose{}%
        \nonscript\;##1\vert\penalty\relpenalty\nonscript\;%
        \mathopen{}%
    }%
}
\DeclarePairedDelimiterX \Set [2] {\lbrace}{\rbrace}
    {\,#1\SetSuchThat[\delimsize]#2\,}
\newcommand{\Cc}{\mathcal{C}}
\newcommand{\Dc}{\mathcal{D}}
\newcommand{\Ic}{\mathcal{I}}
\newcommand{\Kc}{\mathcal{K}}
\newcommand{\R}{\mathbb{R}}
\newcommand{\Ke}{\Kc^{\rm e}}
\newcommand{\defeq}{\triangleq}
\newtheorem{theorem}{Theorem}
\theoremstyle{definition}
\newtheorem{definition}{Definition}
\newtheorem{remark}{Remark}
\newtheorem{example}{Example}
\renewcommand{\bf}{\mathbf{f}} 
\newcommand{\bg}{\mathbf{g}}
\newcommand{\bk}{\mathbf{k}}
\newcommand{\bu}{\mathbf{u}}
\newcommand{\bx}{\mathbf{x}}
\newcommand{\bH}{\mathbf{H}}
\newcommand{\bI}{\mathbf{I}}
\newcommand{\bL}{\mathbf{L}}
\newcommand{\balpha}{\boldsymbol{\alpha}}
\newcommand{\des}{{\operatorname{d}}}
\newcommand{\diag}{{\operatorname{diag}}}
\newcommand{\co}{{\operatorname{co}}}
\DeclareMathOperator*{\argmin}{argmin}
\title{\LARGE \textbf{Combinatorial Control Barrier Functions: Nested Boolean and $p$-choose-$r$ Compositions of Safety Constraints}}
\author{Pio Ong, Haejoon Lee, Tamas G. Molnar, Dimitra Panagou, Aaron D. Ames %
\thanks{PO and AA are with the Department of Mechanical and Civil Engineering, California Institute of Technology, Pasadena, CA 91125, USA. \texttt{\{pioong, ames\}@caltech.edu}.}
\thanks{HL and DP are with the Department of Robotics, University of Michigan, Ann Arbor, MI 48109, USA. \texttt{\{haejoonl, dpanagou\}@umich.edu}.}
\thanks{TM is with the Department of Mechanical Engineering, Wichita State University, Wichita, KS 67260, USA. \texttt{tamas.molnar@wichita.edu}.}
\thanks{This research was in parts supported by the Technology Innovation Institute and AFOSR Award \#113535-19668.}
}
\begin{document}

\maketitle
\thispagestyle{empty}
\begin{abstract}
    This paper investigates the problem of composing multiple control barrier functions (CBFs)---and matrix control barrier functions (MCBFs)---through logical and combinatorial operations. Standard CBF formulations naturally enable conjunctive (AND) combinations, but disjunctive (OR) and more general logical structures introduce nonsmoothness and possibly a combinatorial blow-up in the number of logical combinations. We introduce the framework of {\em combinatorial CBFs} that addresses $p$-choose-$r$ safety specifications and their nested composition. The proposed framework ensures safety for the exact safe set in a scalable way, using the original number of primitive constraints. We establish theoretical guarantees on safety under these compositions, and we demonstrate their use on a patrolling problem in a multi-agent system.
\end{abstract}

\begin{IEEEkeywords}
   Constrained control, Lyapunov methods, Nonlinear systems, Control barrier functions
    \end{IEEEkeywords}
    
\section{Introduction}

\IEEEPARstart{E}{nsuring} dynamic safety in modern control systems has become essential in applications such as robotics, autonomous vehicles, and aerospace systems. Control barrier functions (CBFs)~\cite{ADA-SC-ME-GN-KS-PT:19} provide one of the most widely used tools for addressing safety.
They were arguably popularized through their ability to be framed as safety filters using the quadratic program (QP) formulation~\cite{ADA-XX-JWG-PT:17}. The resulting optimization-based controllers facilitate the combination of multiple Lyapunov and barrier constraints to handle stability and safety simultaneously. This divide-and-conquer approach has been a key to the practical adoption of CBFs, making it straightforward to handle multiple control criteria.

As safety has taken on greater importance, many works, such as~\cite{MR-MK-SH:16, XX:18,LW-ADA-ME:16}, have extended the QP formulation to integrate multiple CBFs simultaneously. While the optimization framework  facilitates such integration, it also requires a verification on compatibility: even when each CBF admits a safeguarding control, there may not be one in conjunction. This issue has motivated studies on compatibility of multiple CBFs~\cite{XT-DVD:22} and methods for ensuring feasibility of controllers with multiple CBFs~\cite{JB-DP:23,HL-PR-DP:25}. In certain cases, such as parallel safe set boundaries or box constraints, compatibility can be guaranteed~\cite{MHC-EL-ADA:25, KHK-MD-MK:25}. Nevertheless, the majority of these works remains limited to simple conjunctive combinations of CBFs, whereas practical systems often require richer logical structures among safety constraints.

Prior work on the composition of CBFs has largely focused on combining them into a single function and treating the result as a new CBF. For instance, the works~\cite{PG-JC-ME:17, LW-ADA-ME:16} perform AND/OR composition of CBFs through min/max functions and deal directly with the resulting nonsmooth barrier function. This formulation has been applied to marine vehicle applications~\cite{YY-CM-YP:25}. Since nonsmooth functions are difficult to handle in general, other works approximate them with smooth functions such as softmin/softmax~\cite{TGM-ADA:23, TGM:25,MB-DP:23}, enabling applications to quadrotors~\cite{MH-MJ-KA:25} and safe reinforcement learning~\cite{CW-XW-YD-LS-XG:25}. This approach has been later extended to handle hierarchical complex objectives~\cite{RL-ME:25}. Another line of work combines CBFs  through signal temporal logic~\cite{LL-DVD:19,LL-DVD:19b}. More recently, matrix-valued CBFs~\cite{PO-YX-RMB-FJ-ADA:25-tac} have been proposed as a way to capture logical combinations. Despite this progress, the existing literature remains focused on Boolean (AND/OR) combinations of safety constraints. In contrast, here we consider the more general case of combinatorial compositions, safety specifications that commonly arise in fault-tolerant and multi-agent systems (see, e.g., Fig.~\ref{fig:circle}). Our approach builds on the matrix CBF perspective of using multiple inequality constraints rather than a single one. We summarize our contributions next.

\begin{figure}[t]
    \centering
    \vspace{0.28cm}
    \includegraphics[width=\linewidth]{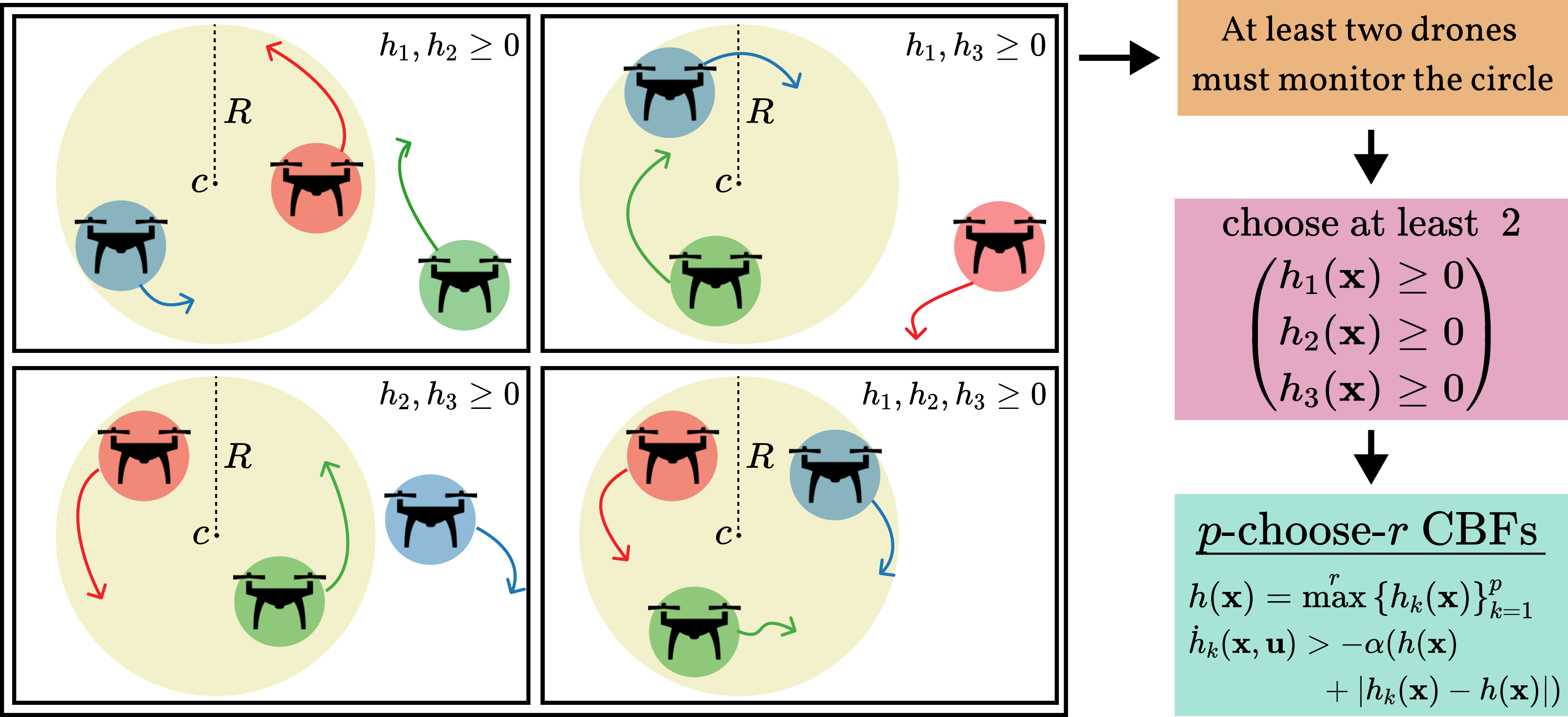}
    \caption{An example of a $p$-choose-$r$ constraint in multi-drone surveillance.
    }
    \label{fig:circle}
\end{figure}

In this paper, we go beyond the Boolean compositions studied in prior works and develop a framework for the combinatorial composition of CBFs. We introduce the notion of $p$-choose-$r$ CBFs that address safety problems where at least $r$ out of $p$ constraints must hold. These CBFs are defined based on
sorting individual primitive constraints and they serve as pivots in the barrier conditions.
Like in the standard AND case, our method enforces safety through multiple inequalities involving the primitive CBFs, which allows us to avoid nonsmoothness issues present in formulations based on nonsmooth min/max operators. In addition, the sorting and pivoting arguments extend naturally to nested logical structures, while ensuring that the number of barrier conditions remains equal to the number of primitive CBFs, despite the combinatorial nature of the safe set composition. Moreover, unlike approaches using smooth relaxation, our method preserves the safe set exactly as specified by the logical combinations. We demonstrate the flexibility and scalability of the framework on a multi-agent patrolling task.

\section{Background}

\subsection{Control barrier functions}
Consider the control-affine system\footnote{For a positive integer $p$, we denote the set of consecutive numbers as ${\until{p}=\{1, 2, \dots, p\}}$. The set of symmetric matrices in $\real^{p \times p}$ is denoted by $\mathbb S^p$, and $\mathbf{I}_p$ is the ${p \times p}$ identity matrix. For a continuously differentiable function ${\map{h}{\real^n}{\real}}$ with a vector field ${\map{\bf}{\real^n}{\real^n}}$, we define the Lie derivative ${L_{\bf}h(\bx) = \frac{\partial h}{\partial \bx}(\bx) \cdot \bf(\bx)}$. For a continuously differentiable matrix-valued function ${\map{\bH}{\real^n}{\mathbb S^p}}$, the Lie derivative ${\bL_{\bf}\bH}$ is defined element-wise: ${L_{\bf}H_{ij}(\bx) = \frac{\partial H_{ij}}{\partial \bx}(\bx) \cdot \bf(\bx)}$ where $H_{ij}$ is the $(i,j)$-th entry of matrix $\bH$. Function ${\map{\alpha}{(-b,a)}{\real}}$, ${a,b>0}$ is of extended class-$\mathcal{K}$ (${\alpha \in \Ke}$) if it is continuous, strictly increasing, and ${\alpha(0)=0}$.}:
\begin{equation}
    \label{sys:ctrl-affine}
    \dot\bx = \bf(\bx)+\bg(\bx)\bu
\end{equation}
with state $\bx\in\real^n$ and control input $\bu\in\real^m$. The system drift $\map{\bf}{\real^n}{\real^n}$ and the control matrix $\map{\bg}{\real^n}{\real^{m\times n}}$ are assumed continuous. Then, if the control signal $t\mapsto \bu(t)$ is continuous, there exists a continuously differentiable solution $t\mapsto \bx(t)$ to the system~\eqref{sys:ctrl-affine}. We are interested in ensuring any solution $\bx(t)$ evolves within a safety constraint~$\Cc\subset\real^n$.

In simpler problems, the safety constraint is given by a single continuously differentiable scalar function ${\map{h}{\real^n}{\real}}$: 
\begin{equation}\label{eq:safeset}
    \Cc = \setdefb{\bx\in\real^n}{h(\bx)\geq 0}.    
\end{equation}
Then, one may address safety using control barrier functions.
\begin{definition}\longthmtitle{Control Barrier Function,~\cite{ADA-XX-JWG-PT:17}}
    A continuously differentiable function $\map{h}{\real^n}{\real}$ is a \textit{control barrier function} (CBF) for~\eqref{sys:ctrl-affine} if there exists $\alpha\in\Ke$ such that, for each $\bx$ in the set $\Cc$ in~\eqref{eq:safeset}, there exists $\bu\in\real^m$ satisfying:
   \begin{equation}\label{eq:CBF}
   \dot h(\bx,\bu) \defeq L_\bf h(\bx) + \sum_{i=1}^m L_{\bg_i}h(\bx) \bu_i
   > -\alpha(h(\bx)).
   \end{equation}
\end{definition}
When $h$ is a CBF, we may find a control signal that keeps the set $\Cc$ \textit{safe}, such that ${\bx(0) \in \Cc \implies \bx(t) \in \Cc}$ for all time, addressing our safety problem\footnote{Generally, the safety problem is addressed by finding a CBF $h$ that defines a safe set $\Cc$ that is a subset of the actual safety constraint.
}.
One way to design a safe control signal is via optimization.
Given $p$ CBFs, denoted by $\{h_k\}_{k=1}^p$, the corresponding constraints can be addressed simultaneously with an optimization-based controller:
\begin{align}\label{eq:CBF-QP}
    \bk(\bx)  =\argmin_{\bu\in\real^m} \quad & \|\bu-\bk_\des(\bx)\|^2                                                                  \\
             \textup{s.t.} \quad & \dot h_k(\bx,\bu) \geq -\alpha(h_k(\bx)),~\forall k\in\until{p}, \nonumber
\end{align}
that changes a desired controller $\bk_\des$ into a safe controller $\bk$.

One advantage of CBFs is their flexibility to account for multiple safety constraints: the CBF-based quadratic programming (CBF-QP) framework above enables a simple integration of multiple CBFs, addressing conjunctive (AND) combination between them.
In this case, we note:  
\begin{equation} \label{eq:cbf_and}
\begin{aligned}
    \!\!\!\bx \!\in\! {\textstyle \bigcap_{k=1}^{p}} & \Cc_k \!\iff\! \bx \!\in\! \Cc_1\ \text{AND } \bx \!\in\! \Cc_2\ \ldots\ \text{AND } \bx \!\in\! \Cc_p, \\
    {\textstyle \bigcap_{k=1}^{p}} & \Cc_k = \setdefb{x \in \real^n}{\min\{h_k(\bx)\}_{k=1}^p \geq 0},
\end{aligned}
\end{equation}
suggesting the formulation renders the intersection of the sets forward invariant, assuming that the CBFs are compatible and the optimization is feasible for all $\bx$ in the set.

This paper investigates logical combinations of constraints beyond AND.
Despite the simplicity in dealing with AND combinations, other types like disjunctive (OR) and more complex logical combinations are difficult to handle.
For example, OR logic encodes the union of safety constraints:
\begin{equation} \label{eq:cbf_or}
\begin{aligned}
    \bx \!\in\! {\textstyle \bigcup_{k=1}^{p}} & \Cc_k \!\iff\! \bx \!\in\! \Cc_1\ \text{OR } \bx \!\in\! \Cc_2\ \ldots\ \text{OR } \bx \!\in\! \Cc_p, \\
    {\textstyle \bigcup_{k=1}^{p}} & \Cc_k = \setdefb{x \in \real^n}{\max\{h_k(\bx)\}_{k=1}^p \geq 0}.
\end{aligned}
\end{equation}
While the AND and OR combinations of safety constraints can be described using the $\min$ and $\max$ functions, these lead to nonsmooth barrier functions~\cite{PG-JC-ME:17} and potentially discontinuous controllers if used directly in optimization.

\subsection{Matrix control barrier functions}

Matrix control barrier functions (MCBFs) can characterize nonsmooth safe sets from logical compositions between CBFs. Consider the safety constraint defined by a continuously differentiable matrix-valued function $\map{\bH}{\real^n}{\mathbb S^p}$:
\begin{align}\label{eq:safeset_AND}
    \Cc&=\setdefb{\bx\in\real^n}{\bH(\bx) \succeq 0}.
\end{align}
In contrast to sets defined by scalar-valued functions, the set above can potentially be nonsmooth. Analogous to CBFs, the previous work~\cite{PO-YX-RMB-FJ-ADA:25-tac} provides the definition for MCBFs.
\begin{definition}\longthmtitle{Matrix CBF,~\cite{PO-YX-RMB-FJ-ADA:25-tac}}
    A continuously differentiable function ${\map{\bH}{\real^n}{\mathbb S^p}}$ is a \textit{matrix control barrier function} (MCBF) for~\eqref{sys:ctrl-affine} if there exists ${\alpha\in \Ke}$ such that, for each $\bx$ in the set $\Cc$ in~\eqref{eq:safeset_AND}, there exists ${\bu\in\real^m}$ satisfying:
   \begin{equation}\label{eq:MCBF}
   \dot\bH(\bx,\bu) \defeq \bL_\bf\bH(\bx) + \sum_{i=1}^m \bL_{\bg_i}\bH(\bx) \bu_i
   \succ -\balpha(\bH(\bx)),
   \end{equation}
   where the matrix function $\map{\balpha}{\mathbb S^p}{\mathbb S^p}$ applies $\alpha$ on the eigenvalues of $\bH$ while keeping the eigenspaces the same.
\end{definition}
We can use MCBFs to enforce safety by posing an optimization problem like~\eqref{eq:CBF-QP}. In this case, the constraint~\eqref{eq:MCBF} (with a positive semidefinite inequality,~$\succeq$) renders the optimization a semidefinite programming (CBF-SDP). 

By enforcing $\bH$ to be positive semidefinite, we effectively ensure that all of its eigenvalues are positive at all times.
We consider the eigenvalues in ascending order:
$$
\lambda_1(\bx) \leq \dots \leq \lambda_p(\bx).
$$
For diagonal matrices, the eigenvalues are the diagonal entries. Thus, when ${\bH(\bx)= \diag(\{h_k(\bx)\}_{k=1}^p)}$ is the diagonalization of $p$ CBFs, the CBF-SDP makes all $\{h_k(\bx)\}_{k=1}^p$ remain nonnegative as in~\eqref{eq:cbf_and}. In this formulation, the constraint derived from the MCBF is equivalent to the inequalities in~\eqref{eq:CBF-QP}, addressing the AND composition between CBFs.

The work~\cite{PO-YX-RMB-FJ-ADA:25-tac} also addresses another type of safe sets:
\begin{align}\label{eq:safeset_OR}
    \Cc=\setdefb{\bx\in\real^n}{\bH(\bx) \not \prec 0}
    =\setdefb{\bx\in\real^n}{\lambda_p(\bx) \geq 0}
\end{align} 
and proposes the following framework.
\begin{definition}\longthmtitle{Indefinite-MCBF, \cite{PO-YX-RMB-FJ-ADA:25-tac}}
A continuously differentiable function ${\map{\bH}{\real^n}{\mathbb S^p}}$ is an \textit{indefinite matrix control barrier function} (Indefinite-MCBF) for~\eqref{sys:ctrl-affine} if there exists ${\alpha\in \Ke}$ and $c_\perp\geq 0$ such that, for each $\bx$ in the set $\Cc$ in~\eqref{eq:safeset_OR}, there exists ${\bu\in\real^m}$ satisfying:
    \begin{equation}\label{eq:Indefinite-MCBF}
   \dot\bH(\bx,\bu) \succ -\alpha\big(\lambda_p(\bx)\big)\bI_{p} -c_\perp\big(\lambda_p(\bx)\bI_{p}\!-\!\bH(\bx)\big).
\end{equation}
\end{definition}
While the motivation of indefinite-MCBF was to address safe sets that are negations of semidefinite constraints, enabling applications like obstacle avoidance, the framework can also address OR combinations between scalar CBFs. When a diagonal matrix ${\bH(\bx)= \diag(\{h_k(\bx)\}_{k=1}^p)}$ is constructed, the Indefinite-MCBF enforces positivity of the maximal CBF $h_{\max}(\bx)= \lambda_p(\bx)$. Furthermore, when the condition in \eqref{eq:Indefinite-MCBF} is diagonal, the corresponding CBF-SDP can be equivalently reformulated into a CBF-QP.

In this paper, we build on the observations on how we can address OR combinations through sorting CBFs, and we generalize the method for addressing \textit{p-choose-r} constraints that, to the best of our knowledge, have not yet been studied.

\section{Combinatorial Control Barrier Functions}
Next, we address combinatorial safety constraints.
Provided $p$ constraints, we enforce that at least $r$ constraints are satisfied at all times (where ${r \leq p}$), regardless of which constraints. We call this a $p$-choose-$r$ safety requirement (with a slight abuse of the more precise term $p$-choose-at-least-$r$). Such compositional structures naturally arise, for example, when defining safe sets through unions and intersections (OR/AND logic, cf. Example~\ref{ex:corner_rect_cross}) or in fault-tolerant and multi-agent settings where only a subset of components must satisfy a safety condition (Example~\ref{ex:circle}). In these cases, simply neglecting the temporarily inactive constraints can cause control discontinuities, motivating our continuous solution below.
Our proposed CBF formulation guarantees the satisfaction of $p$-choose-$r$ conditions while avoiding the combinatorial blow-up in the number of combinations.

Consider $p$ safety constraints in the form  ${h_k(\bx) \geq 0}$, given by functions $\{h_k\}_{k=1}^p$.
To satisfy at least $r$ out of these $p$ constraints, ${h_k(\bx) \geq 0}$ must hold for the $r$-th largest $h_k$ value, which is also the $(p-r+1)$-th smallest $h_k$, denoted~by:
\begin{equation}\label{eq:sorting_CBFs}
    h(\bx)  = \max^r \{h_k(\bx)\}_{k=1}^p = \min^{p-r+1} \{h_k(\bx)\}_{k=1}^p.
\end{equation}
Then, the safe set $\Cc$ is defined as in~\eqref{eq:safeset}.
Note that the special cases $p$-choose-$p$ and $p$-choose-$1$ are equivalent to the AND and OR combination of safety constraints in~\eqref{eq:cbf_and}-\eqref{eq:cbf_or}, that is, $\max^{p}$ and $\max^{1}$ simplify to $\min$ and $\max$, respectively.

\begin{example}\longthmtitle{$p$-choose-$r$ Constraints}
\label{ex:corner_rect_cross}
Figure~\ref{fig:sets} exemplifies safe sets in two dimensions obtained from $p$-choose-$r$ compositions.
The corner in Fig.~\ref{fig:sets}(a) is a $2$-choose-$1$ constraint:
\begin{equation} \label{eq:ex_corner}
\begin{aligned}
    h(\bx) = \max^1\{h_k(\bx)\}_{k=1}^2
    = \max\{h_1(\bx),h_2(\bx)\},
\end{aligned}
\end{equation}
encoding that at least one of the two constraints must be satisfied, hence the safe set is the union of two half spaces.
The rectangle in Fig.~\ref{fig:sets}(b) implies a $4$-choose-$4$ constraint:
\begin{equation} \label{eq:ex_rectangle}
\begin{aligned}
    h(\bx) \!=\! \max^4\{h_k(\bx)\}_{k=1}^4
    \!=\! \min\{h_1(\bx),h_2(\bx),h_3(\bx),h_4(\bx)\},
\end{aligned}
\end{equation}
respecting all four boundaries and creating the intersection of four half spaces.
Meanwhile, a $4$-choose-$3$ composition:
\begin{equation} \label{eq:ex_cross}
\begin{aligned}
    h(\bx) = \max^3\{h_k(\bx)\}_{k=1}^4,
\end{aligned}
\end{equation}
captures the cross-shaped region in Fig.~\ref{fig:sets}(c) where at least three out of four constraints hold.~\hfill $\bullet$
\end{example}

\begin{figure}
\centering
\includegraphics[scale=1]{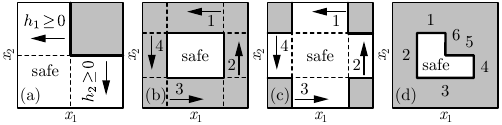}
\vspace{-3mm}
\caption{
Examples of safe sets defined by the logic: (a) $2$-choose-$1$, (b) $4$-choose-$4$, (c) $4$-choose-$3$, and (d) $2$-choose-$2$ of $4$-choose-$4$ and $2$-choose-$1$.
}
\vspace{-5mm}
\label{fig:sets}
\end{figure}

Inspired by the previous result, we establish the concept of combinatorial CBFs to address $p$-choose-$r$ safety constraints.
\begin{definition}\longthmtitle{Combinatorial CBF}
    A function $\map{h}{\real^n}{\real}$ constructed from sorting CBFs $\{h_k\}_{k=1}^p$ as in~\eqref{eq:sorting_CBFs} is a \textit{combinatorial control barrier function} ($p$-choose-$r$ CBF) for~\eqref{sys:ctrl-affine} if there exists $\alpha\in\Ke$ such that, for each $\bx$ in the set $\Cc$ in~\eqref{eq:safeset}, there exists $\bu\in\real^m$ satisfying:
   \begin{equation}
    \dot h_k(\bx,\bu) > -\alpha \big( h(\bx)+\vert h_k(\bx)-h(\bx)\vert \big), \quad \forall k\in\until{p}.
    \label{eq:c-cbf}
    \end{equation}
\end{definition}
The intuition behind the $p$-choose-$r$ CBF $h$ is that we always ensure that the $r$-th largest $h_k$ value, which we refer to as the \textit{pivot}, remains nonnegative. While the absolute value term in~\eqref{eq:c-cbf} allows $h_k$ values smaller than $h$ to become negative, it also guarantees that the $r$-th largest $h_k$, and thus at least $r$ out of the $p$ functions $\{h_k\}_{k=1}^p$, remain nonnegative.

A $p$-choose-$r$ CBF enables the following safety result.
\begin{theorem}\longthmtitle{Combinatorial Safety}
\label{thm:pCr_safety}
    Consider the system~\eqref{sys:ctrl-affine}.
    Let $h$ be constructed from a combination of primitive CBFs $\{h_k\}_{k=1}^p$ as in~\eqref{eq:sorting_CBFs}. If $h$ is a $p$-choose-$r$ CBF for~\eqref{sys:ctrl-affine}, then the set $\Cc$ in~\eqref{eq:safeset} is control invariant (safe).
    
    In addition, any state feedback controller ${\map{\bk}{\real^n}{\real^m}}$,  ${\bu=\bk(\bx)}$, that is continuous and satisfies:
    \begin{equation}
        \label{eq:pCr_BC}
    \dot h_k(\bx,\bk(\bx)) \geq -\alpha \big( h(\bx)+\vert h_k(\bx)-h(\bx)\vert \big), ~\forall k\in\until{p} 
    \end{equation}
    on a neighborhood ${\Dc\supset\Cc}$, renders the set $\Cc$ forward invariant and ensures the bounds:
    \begin{equation}\label{eq:pCr_bound}
        \frac{d}{dt} \big(\max^j\{h_k(\bx(t)\}_{k=1}^p\big) \geq -\alpha\big(\max^j\{h_k(\bx(t)\}_{k=1}^p\big)
    \end{equation}
    for all ${j \geq r}$  at almost every time $t\geq 0$. Consequently, for any initial condition~${\bx_0\in\Cc}$, there exist at least $r$ indices ${k\in \until{p}}$ at almost every time ${t\geq 0}$ such that ${h_k(\bx(t))\geq 0}$.

    In particular, the following CBF-QP:
    \begin{align}\label{eq:pCr-CBF-QP}
    \bk(\bx)  =\argmin_{\bu\in\real^m} \quad & \|\bu-\bk_\des(\bx)\|^2                                                                  \\
             \textup{s.t.} \quad  \dot h_k(\bx,\bu) &\geq -\alpha \big( h(\bx)+\vert h_k(\bx)-h(\bx)\vert \big), ~\forall k\in\until{p} \nonumber
    \end{align}
    is a continuous controller satisfying~\eqref{eq:pCr_BC}.
\end{theorem}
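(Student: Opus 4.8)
The plan is to reduce the entire statement to a one-dimensional comparison argument on the pivot $h(t)=\max^r\{h_k(\bx(t))\}_{k=1}^p$. The algebraic fact that drives everything is that the argument of $\alpha$ in \eqref{eq:pCr_BC} can be rewritten as $h(\bx)+\vert h_k(\bx)-h(\bx)\vert=\max\{h_k(\bx),\,2h(\bx)-h_k(\bx)\}$; hence whenever a primitive value sits at or above the pivot, $h_k(\bx)\geq h(\bx)$, this term collapses to $h_k(\bx)$. In particular, every index that attains the pivot value ($h_k=h$) turns \eqref{eq:pCr_BC} into exactly $\dot h_k(\bx,\bk(\bx))\geq-\alpha(h(\bx))$, which is the standard barrier inequality for the pivot.

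The main obstacle is that $h=\max^r$ is only locally Lipschitz, since ties reorder the sorting and it cannot be differentiated classically. I would handle this through the order-statistic representation $\max^r\{h_k\}_{k=1}^p=\max_{\vert S\vert=r}\min_{k\in S}h_k$, which writes $h$ as a finite maximum of finite minima of $C^1$ functions; composed with the $C^1$ trajectory $t\mapsto\bx(t)$, the map $t\mapsto h(\bx(t))$ is locally Lipschitz and possesses a right lower Dini derivative $D_+h$ everywhere. For any subset $S$ attaining the outer maximum one has $D_+h\geq D_+(\min_{k\in S}h_k)=\min_{k\in B_S}\dot h_k$, where $B_S\subseteq S$ collects the indices realizing the inner minimum; every such index satisfies $h_k=h$, so by the collapse above $\dot h_k\geq-\alpha(h)$, and therefore $D_+h(\bx(t))\geq-\alpha(h(\bx(t)))$ at every $t$ with $\bx(t)\in\Dc$.

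With the scalar differential inequality $D_+h\geq-\alpha(h)$ established, I would invoke the nonsmooth comparison lemma: since $\alpha\in\Ke$ has $\alpha(0)=0$, the value $0$ is an equilibrium of $\dot\beta=-\alpha(\beta)$, so the solution with $\beta(0)=h(\bx_0)\geq0$ stays nonnegative and lower-bounds $h(\bx(t))$, giving $h(\bx(t))\geq0$ for all $t$ and forward invariance of $\Cc$. Since $\max^r\geq0$ means the $r$-th largest primitive value is nonnegative, at least $r$ of the $h_k(\bx(t))$ are nonnegative at almost every time, which is the asserted consequence. The same collapse $h+\vert h_k-h\vert=h_k$ occurs for every order statistic sitting at or above the pivot, so rerunning the max-min Dini computation for those $\max^j$ yields the family of bounds \eqref{eq:pCr_bound}, with $j=r$ being the decisive case.

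It remains to connect the combinatorial-CBF property \eqref{eq:c-cbf} to the controller claims, for which the quadratic program \eqref{eq:pCr-CBF-QP} is the natural construction. The strict inequalities in \eqref{eq:c-cbf} guarantee that at each $\bx\in\Cc$ the feasible set of \eqref{eq:pCr-CBF-QP} has nonempty interior (Slater's condition); the cost is strictly convex and the constraint data---the coefficients $L_{\bg_i}h_k(\bx)$ together with the offsets $-\alpha(h(\bx)+\vert h_k(\bx)-h(\bx)\vert)-L_\bf h_k(\bx)$---are continuous in $\bx$ (continuity, not smoothness, is all that the parametric-QP continuity result requires), so the minimizer $\bk$ is continuous, and because strict feasibility is an open condition it persists on a neighborhood $\Dc\supseteq\Cc$ where \eqref{eq:pCr_BC} holds with the non-strict inequality. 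Feeding this continuous $\bk$ into the comparison argument above shows that $\Cc$ is control invariant and discharges every clause of the theorem. I expect the two delicate points to be the rigorous passage through ties via the max-min form in the Dini estimate, and the appeal to continuity of the QP minimizer despite the right-hand sides being nonsmooth in $\bx$.
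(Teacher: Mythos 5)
Your proposal is correct on the substantive claims and rests on the same key observation as the paper --- that $h(\bx)+\vert h_k(\bx)-h(\bx)\vert$ collapses to $h_k(\bx)$ exactly when $h_k(\bx)\geq h(\bx)$, so every index attaining the pivot inherits the standard barrier inequality $\dot h_k\geq-\alpha(h(\bx))$ --- but you route the nonsmooth analysis differently. The paper works with the Clarke generalized gradient, writing $\partial(\max^j\{h_k\})=\co\bigcup_{k\in\Kc_j(\bx)}\{\tfrac{\partial h_k}{\partial\bx}\}$ for the active set $\Kc_j(\bx)$, applying the nonsmooth chain rule to place $\tfrac{d}{dt}\max^j$ in the convex hull of the active $\dot h_k$, and then invoking the nonsmooth barrier function result of Glotfelter et al.\ to conclude forward invariance of $\Cc$. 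You instead use the order-statistic representation $\max^r\{h_k\}=\max_{\vert S\vert=r}\min_{k\in S}h_k$, estimate the lower right Dini derivative of $t\mapsto h(\bx(t))$ through the attaining subset and its active indices, and close with a scalar comparison lemma. Both are sound; your version is more elementary and self-contained (no appeal to Clarke calculus or an external nonsmooth-barrier proposition), while the paper's subgradient formulation is what lets it state the conclusion for a feedback law on all of $\Dc$ rather than along a fixed trajectory, and is the form that generalizes by induction to the nested case of Theorem~\ref{thm:nested_logic}. Your treatment of the QP --- strict feasibility from the strict inequality in~\eqref{eq:c-cbf}, openness of strict feasibility giving the neighborhood $\Dc$, and continuity of the minimizer from parametric-QP continuity with merely continuous data --- is the same argument the paper compresses into two sentences, spelled out more carefully.

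One point deserves attention. Your collapse criterion is ``order statistics sitting at or above the pivot,'' i.e.\ $\max^j\geq\max^r$, which under the paper's convention ($\max^1$ is the largest, $\max^p$ the smallest) is the range $j\leq r$, not the range $j\geq r$ appearing in~\eqref{eq:pCr_bound}. For $j>r$ one has $h_k=\max^j\leq h$ on the active set, so the argument of $\alpha$ becomes $2h(\bx)-h_k(\bx)\geq h_k(\bx)$ and the constraint~\eqref{eq:pCr_BC} is strictly weaker than $\dot h_k\geq-\alpha(h_k)$; neither your computation nor the paper's actually delivers~\eqref{eq:pCr_bound} in that regime, and indeed the whole purpose of the absolute-value term is to let the below-pivot functions decay faster. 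Your verbal criterion is the mathematically correct one; just be aware that it proves the bounds for $j\leq r$ (the decisive case $j=r$ included, which is all that invariance and the ``at least $r$ indices'' consequence require), so do not claim it yields~\eqref{eq:pCr_bound} as literally indexed.
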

\begin{proof}
    Because each $\max^j\{h_k(\bx)\}_{k=1}^p$ is nonsmooth, the proof relies on nonsmooth barrier function theory. In particular, let $\Kc_j(\bx) = \setdefb{k\in\until p}{h_k(\bx) = \max^j\{h_k(\bx)\}_{k=1}^p}$ be the set of indices of CBFs with the same value as the function of interest.
    Then we have the subgradient~\cite{FHC:83}:
    \begin{align*}
    \partial \big(\max^j\{h_k(\bx)\}_{k=1}^p\big) &= \co\bigcup_{k\in\Kc_j(\bx)} \left\{\frac{\partial h_k}{\partial \bx}(\bx)\right\} \\
    \implies \frac{d}{dt} \big(\max^j\{h_k(\bx(t)\}_{k=1}^p\big)&\in \co\bigcup_{k\in\Kc_j(\bx)}\{\dot h_k(\bx(t),\bu(t))\}.
    \end{align*}
    The implication above follows from the nonsmooth chain rule~\cite[Thm. 2.3.10]{FHC:83} (see also, \cite{PG-JC-ME:17}).
    Then we note the inequalities in~\eqref{eq:pCr_BC} enforce for every $k\in\Kc_j(\bx)$:
    $$
    \dot h_k(\bx(t),\bu(t)) \geq -\alpha(h_k(\bx(t)))
    $$
    when $j\geq r$, from the definition of the sorting operator $\max^j$. Hence, the bounds given in~\eqref{eq:pCr_bound} follow. 
    
    Furthermore, when considering the case $j=r$, we derive:
    $$
    \langle\xi,~\bf(\bx)+\bg(\bx)\bk(\bx)\rangle \geq -\alpha(\max^r\{h_k(\bx)\}_{k=1}^p), 
    $$
    for all $\xi \in \partial\big(\max^r\{h_k(\bx)\}_{k=1}^p\big)$ on the neighborhood $\Dc$. The function $\max^r\{h_k(\bx)\}_{k=1}^p$ is therefore a nonsmooth barrier function~\cite[Prop. 2]{PG-JC-ME:17} for the closed-loop system, verifying the statement of forward invariance.
    
    To prove control invariance, we show that the CBF-QP is a valid safeguarding controller. The definition of the $p$-choose-$r$ CBFs using strict inequality in~\eqref{eq:c-cbf} ensures that the CBF-QP satisfies Slater's condition at every state $\bx$ in a neighborhood $\Dc\supset \Cc$ of the safe set, so it is well-defined and continuous there~\cite{PM-AA-JC:25}. In addition, the CBF-QP enforces~\eqref{eq:pCr_BC} from its constraint, so we may deduce safety.
\end{proof}

Theorem~\ref{thm:pCr_safety} provides the CBF-QP for $p$-choose-$r$ CBFs. An important feature of this construction is that the number of inequalities in the optimization remains $p$ regardless of $r$. This avoids the combinatorial blow-up in the numbers of constraint combinations, yielding a more practical complexity that scales linearly with $p$. This is achieved by sorting the $h_k$ values in~\eqref{eq:sorting_CBFs} to obtain $h$ and then using $h$ as \textit{pivot} in~\eqref{eq:pCr_BC} to enforce constraints only for the $r$ largest $h_k$. The bound~\eqref{eq:pCr_bound} ensures forward invariance with the ordered CBFs rather than the individual primitive CBFs, allowing the active subset of $r$ constraints to change dynamically over time. While it may appear natural to apply the CBF condition directly to the $r$-th largest function (i.e., use ${\dot h \geq -\alpha(h)}$), such an approach involves the derivative of a nonsmooth function, leading to a discontinuous CBF-QP, unlike our proposed framework. Moreover, notice that $p$-choose-$r$ CBFs define exact safe sets, avoiding conservatism introduced by smooth approximations of $h$ using softmin or softmax operators as done in~\cite{TGM-ADA:23, TGM:25, MB-DP:23}. 

Similar to the case of multiple CBFs in~\eqref{eq:CBF-QP}, the feasibility of constraint~\eqref{eq:pCr_BC} is not automatically guaranteed for CBFs $h_k$, but it requires $h$ to be a $p$-choose-$r$ CBF. For instance, in the AND setting ($p$-choose-$p$), the existence of a safe input $\bu$ is only guaranteed individually for each CBF~$h_k$, while joint feasibility needs to be verified, i.e., $h$ must be a $p$-choose-$p$ CBF. This remains an active area of research in the CBF literature. In the $p$-choose-$r$ setting, the situation is subtler: condition~\eqref{eq:c-cbf} must hold even when some ${h_k(\bx)<0}$, i.e., outside the associated set. There, the absolute value term helps relax this requirement. In principle, the absolute value can be replaced by any positive definite function to help address feasibility and compatibility issues, but this choice recovers the standard AND case when ${h(\bx) = \min\{h_k(\bx)\}_{k=1}^p}$ and yields the bounds in~\eqref{eq:pCr_bound}. A full characterization of feasibility is a part of our future work.

\begin{remark}[Task Assignment in Multi-Agent Systems]
A key motivation for $p$-choose-$r$ CBFs arises in multi-agent systems. In many settings, safety constraints apply to each agent independently, but the mission requirements involve only subsets of the agents. For instance, in a surveillance task, the collective objective may be to ensure that a sensitive region is not left unattended, and it is sufficient that at least $r$ agents  remain within the region. Indeed, while mission planning typically handles such task allocation decisions, embedding this logic into combinatorial CBFs enhances safety assurance at the faster control layer, complementing the slower and more uncertainty-prone planning layer. In addition, unlike the single-agent case, feasibility and compatibility are less problematic here since the agents' dynamics and control inputs are decoupled, allowing the constraints to be satisfied in parallel. The following example uses a $p$-choose-$r$ CBF in a multi-agent task assignment scenario.~\hfill $\bullet$

\begin{example}\longthmtitle{Surveillance Task}
\label{ex:circle}
    Consider a circular region shown in Fig.~\ref{fig:circle} and a system of ${p=3}$ robots with dynamics:
    \begin{equation*}
        \dot \bx_k = \bu_k, \ k \in [p].
    \end{equation*}
    We can use a $p$-choose-$r$ CBF to assign at least ${r=2}$ robots out of ${p=3}$ to remain inside the circle at all times. Define the state ${\bx=[\bx_1^\top\ \bx_2^\top\ \ldots\ \bx_p^\top]^\top}$ and the following functions:
    \begin{equation*}
    \begin{aligned}
            h_k(\bx) &= R^2 - \|c-\bx_k\|^2, \ k\in[p], \\
    h(\bx) &= \max^r\{h_k(\bx)\}_{k=1}^p.
    \end{aligned}
    \end{equation*}
    where $R$ and $c$ are the radius and the center of the region.
Each function $h_k(\bx)\geq 0$ indicates that the robot $k$ lies inside the circle, so $h(\bx)$ is a $p$-choose-$r$ CBF which is nonnegative if and only if at least $r$ robots are inside the circle.~\hfill $\bullet$ 
\end{example}
\end{remark}
\section{Nested Logic for Safety Constraints}
We now extend our framework to handle nested combinatorial safety constraints, where multiple combinatorial CBFs are composed hierarchically. This generalization builds on the same sorting and pivoting arguments used for $p$-choose-$r$ CBFs, but applied recursively to follow the nested structure. The resulting formulation encodes the overall safety requirement through a single pivot safety function $h$, which plays the role of the effective barrier function for the nested logic.

For simplicity, we first focus on two-level nested logic. Consider a $p^2$-choose-$r^2$ combination of multiple $p^1_i$-choose-$r^1_i$ combinations among some $i$-th subset of \textit{primitive} CBFs $\{h_k\}_{k=1}^p$. Note carefully  that the superscripts on $p$'s and $r$'s  indicate the nested level of the logical combinations, rather than exponentiation. We define the safe set $\Cc$ as in \eqref{eq:safeset} with:
\begin{equation}
\begin{aligned}
    h_i^1(\bx) & = \max^{r^1_i}\{h_k(\bx)\}_{k\in\Ic_i},
    \quad i \in \until{p^2}, \\
    h(\bx) & =
    \max^{r^2} \{h_k^{1}(\bx)\}_{k=1}^{p^2},
\end{aligned}
\end{equation}
where $\Ic_i$ denote the indices of the $p^1_i$ constraints that are combined through the $p^1_i$-choose-$r^1_i$ logic. Here, we reveal that the nested logic is a logical combination of combinatorial CBFs $\{h_k^1\}_{k=1}^{p^2}$. We provide the following  concrete example.

\begin{example}\longthmtitle{Nested Constraints}
\label{ex:L-shape}
Consider the L-shaped region in Fig.~\ref{fig:sets}(d) that is the intersection of the rectangle in Fig.~\ref{fig:sets}(b) and the corner in Fig.~\ref{fig:sets}(a).
The L-shape is described by a $2$-choose-$2$ (AND) combination of a $4$-choose-$4$ constraint (rectangle) and a $2$-choose-$1$ (corner):
\begin{align}
    h_1^1(\bx) &= \max^4\{h_k(\bx)\}_{k=1}^4,~h_2^1(\bx) = \max^1\{h_k(\bx)\}_{k=5}^6, \nonumber \\
    h(\bx) &= \max^2 \{h_k^{1}(\bx)\}_{k=1}^{2}, \label{eq:two_nested_CBF}
\end{align}
see~\eqref{eq:ex_rectangle} and~\eqref{eq:ex_corner} in~\Cref{ex:corner_rect_cross}.~\hfill $\bullet$
\end{example}

Naively, one would deal with the two-level nested logic with inequalities making $h$ a $p^2$-choose-$r^2$ CBF of $\{h_k^{1}\}_{k=1}^{p^2}$:
$$
    \dot h^1_k(\bx,\bu) > -\alpha \big( h(\bx)+\vert h^1_k(\bx)-h(\bx)\vert \big), \quad \forall k\in\until{p^2}.
$$
However, the main concern here is the nonsmoothness associated with $h_k^1$. We can avoid this issue by applying the barrier condition in~\eqref{eq:pCr_BC} directly on the primitive functions, like for one-level logical combinations. Intuitively, while $h_k^1$ is nonsmooth, its derivative $\dot h_k^1$ remains a convex combination of some primitive $\dot h_k$'s. Therefore, by enforcing~\eqref{eq:pCr_BC} on all primitives, we implicitly enforce the desired inequalities above, rendering $h$ a $p^2$-choose-$r^2$ CBF of $\{h^1_k\}_{k=1}^{p^2}$. In other words, the one-level barrier conditions in~\eqref{eq:pCr_BC} extend directly to the two-level case, with the only change being how $h$ is defined. This observation generalizes by induction to any number of nested logical levels.

Consider the case when $p$ primitive CBFs $\{h_k\}_{k=1}^p$ are combined into a single constraint through $M$ levels of safety specifications with index ${\ell \in \until{M}}$.
At each level, the constraints are combined via $p_i^\ell$-choose-$r_i^\ell$ logic to create $p^\ell$ new safety constraints, described by functions $\{h_i^\ell\}_{i=1}^{p^\ell}$:
\begin{equation}\label{eq:nested_CBF}
\begin{aligned}
    h_i^0(\bx) & = h_i(\bx),
    \quad i \in \until{p}, \\
    h_i^\ell(\bx) & =
    \max^{r_i^\ell} \{h_k^{\ell-1}(\bx)\}_{k \in I_i^\ell},
    \quad i \in \until{p^\ell}, \quad \ell \in \until{M}, \\
    h(\bx) & = h_1^M(\bx),
\end{aligned}
\end{equation}
where $I_i^\ell$ denote the indices of constraints that are combined to obtain constraint $i$ at level $\ell$, with ${I_i^\ell \subseteq \until{p^{\ell-1}}}$, ${|I_i^\ell| = p_i^\ell \leq p^{\ell-1}}$ for all ${i \in \until{p^\ell}}$, ${\ell \in \until{M}}$.
With the function $h$ obtained at the last level, the safe set $\Cc$ is defined in~\eqref{eq:safeset}. We summarize our findings with the following theorem. 

\begin{theorem}
    \longthmtitle{Nested Logic Safety}
    Consider the system~\eqref{sys:ctrl-affine}. Let $h$ be constructed from a multi-level nested logical combination of primitive CBFs $\{h_k\}_{k=1}^p$ as in~\eqref{eq:nested_CBF}. If $h$ is a combinatorial CBF for~\eqref{sys:ctrl-affine}, then the set $\Cc$ in~\eqref{eq:safeset} is safe.

    In addition, any state feedback controller ${\map{\bk}{\real^n}{\real^m}}$,  ${\bu=\bk(\bx)}$, that is continuous and satisfies
    \eqref{eq:pCr_BC} on a neighborhood $\Dc\supset \Cc$ renders the set $\Cc$ forward invariant. The CBF-QP~\eqref{eq:pCr-CBF-QP} is one such controller.~\hfill $\blacksquare$
    \label{thm:nested_logic}
\end{theorem}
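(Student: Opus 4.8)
The plan is to reduce this multi-level theorem to the machinery already developed for~\Cref{thm:pCr_safety}, by showing that the single pivot function $h$ in~\eqref{eq:nested_CBF}, although built from $M$ recursive order-statistic selections, has a Clarke subgradient supported entirely on those primitives $h_k$ that attain the value $h(\bx)$. Once this is established, the primitive barrier conditions~\eqref{eq:pCr_BC} will control $\dot h$ along trajectories exactly as in the one-level proof, and forward invariance will follow from the same nonsmooth barrier argument; control invariance will then follow from well-posedness of the CBF-QP~\eqref{eq:pCr-CBF-QP} under the strict inequality~\eqref{eq:c-cbf}.

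First I would prove, by induction on the level $\ell\in\{0,1,\dots,M\}$, the following subgradient claim: for each $i\in\until{p^\ell}$ there is an active primitive index set $\Ac_i^\ell(\bx)\subseteq\until{p}$ such that
\begin{equation*}
    \partial h_i^\ell(\bx) \subseteq \co\bigcup_{k\in\Ac_i^\ell(\bx)}\left\{\frac{\partial h_k}{\partial\bx}(\bx)\right\}, \qquad h_k(\bx)=h_i^\ell(\bx) \ \text{ for all } k\in\Ac_i^\ell(\bx).
\end{equation*}
The base case $\ell=0$ is immediate with $\Ac_i^0(\bx)=\{i\}$. For the inductive step, $h_i^\ell=\max^{r_i^\ell}\{h_k^{\ell-1}\}_{k\in I_i^\ell}$ is a locally Lipschitz selection among the functions $h_k^{\ell-1}$ (themselves locally Lipschitz by the inductive hypothesis), so I would apply the nonsmooth chain rule~\cite[Thm. 2.3.10]{FHC:83} together with the fact that the Clarke subgradient of the $r$-th order statistic at $\bx$ is contained in the convex hull of the coordinate directions $e_k$ with $h_k^{\ell-1}(\bx)=h_i^\ell(\bx)$ --- the same selection principle already used for $\max^j$ in the proof of~\Cref{thm:pCr_safety}. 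This yields $\partial h_i^\ell(\bx)\subseteq\co\bigcup_{k\in\Kc}\partial h_k^{\ell-1}(\bx)$ with $\Kc=\{k\in I_i^\ell : h_k^{\ell-1}(\bx)=h_i^\ell(\bx)\}$, and setting $\Ac_i^\ell(\bx)=\bigcup_{k\in\Kc}\Ac_k^{\ell-1}(\bx)$ the value-matching property propagates, since $h_j(\bx)=h_k^{\ell-1}(\bx)=h_i^\ell(\bx)$ for every relevant $j$. Taking $h=h_1^M$ and $\Ac(\bx)=\Ac_1^M(\bx)$ completes the characterization.

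With the subgradient structure in hand, the remainder mirrors~\Cref{thm:pCr_safety}. For every active primitive $k\in\Ac(\bx)$ the equality $h_k(\bx)=h(\bx)$ makes the absolute-value term in~\eqref{eq:pCr_BC} vanish, so $\dot h_k(\bx,\bk(\bx))\geq-\alpha(h(\bx))$; then for any $\xi=\sum_{k\in\Ac(\bx)}\mu_k\frac{\partial h_k}{\partial\bx}(\bx)\in\partial h(\bx)$ I would compute $\langle\xi,~\bf(\bx)+\bg(\bx)\bk(\bx)\rangle=\sum_k\mu_k\dot h_k(\bx,\bk(\bx))\geq-\alpha(h(\bx))$ on $\Dc$. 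This is exactly the nonsmooth barrier condition, so $h$ is a nonsmooth barrier function and $\Cc=\{h\geq 0\}$ in~\eqref{eq:safeset} is forward invariant by~\cite[Prop. 2]{PG-JC-ME:17}; control invariance and continuity of $\bk$ follow verbatim from the strict inequality in the combinatorial CBF definition and the well-posedness of~\eqref{eq:pCr-CBF-QP}, just as in the one-level case.

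The hard part will be the inductive subgradient claim --- specifically, justifying the chain rule through a nested stack of nonsmooth order-statistic selections and verifying that \emph{only} primitives attaining $h(\bx)$ survive into $\partial h(\bx)$. Everything downstream is essentially a re-run of~\Cref{thm:pCr_safety}; the genuinely new content is that the absolute-value pivot term continues to vanish on the active set even after $M$ levels of composition, which is precisely what lets a single family of $p$ primitive inequalities certify the entire nested specification. A secondary care point is that Clarke's chain rule guarantees only set containment (not equality) of the subgradients, but containment suffices here, since the barrier inequality must be verified for every element of $\partial h(\bx)$.
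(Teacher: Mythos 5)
Your proof is correct and takes essentially the same route the paper intends: the paper omits the proof of this theorem, stating only that it proceeds by induction on the nesting levels and that the key idea is expressing the subgradient set in terms of the gradients of the primitive CBFs, which is precisely your inductive subgradient claim followed by a re-run of the argument of Theorem~\ref{thm:pCr_safety}. The value-matching property (that every primitive surviving into $\partial h(\bx)$ attains the value $h(\bx)$, so the absolute-value term in~\eqref{eq:pCr_BC} vanishes on the active set) is the substantive content the paper leaves implicit, and your treatment of it is sound.
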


We omit the proof for brevity. It follows the same structure as Theorem~\ref{thm:pCr_safety} and proceeds by induction on the nesting levels. The key idea is that the subgradient set can eventually be expressed in terms the gradients of the primitive CBFs. Hence, nested logic introduces no additional complexity beyond redefining the pivot function $h$.

We emphasize that, despite encoding nested logical safety requirements, our formulation requires only the original $p$  primitive constraints. This eliminates the combinatorial blow-up in the number of combinations and significantly reduces the problem size compared to a naive implementation.

\begin{remark}[Logical Composition of MCBFs]
We have described logical compositions starting from primitive scalar CBFs. The same ideas naturally extend to MCBFs, which include scalar CBFs as a special case. In this setting, the role of sorting primitive functions is played by sorting the eigenvalues of the primitive MCBFs, $\{\bH_k\}_{k=1}^p$. In this framework, each MCBF can be interpreted as the first level in a nested CBF: a MCBF corresponds to enforcing a minimum (AND) over its eigenvalues, while an indefinite-MCBF corresponds to enforcing a maximum (OR). A detailed analysis and full technical treatment of this setting are left for future work. ~\hfill $\bullet$
\end{remark}
\section{Simulation}
In this section, we illustrate our results using a multi-agent patrolling problem.
Consider two separated L-shaped regions, ${L_1,L_2 \subset \R^2}$, as shown in Fig.~\ref{fig:L-shape}, that are monitored by ${N=11}$ agents. Each agent follows the dynamics:
\begin{equation} \label{eq:single_integrator}
\dot \bx_j = \bu_j, \ j\in [N]
\end{equation}
and is assigned a desired patrolling controller:
\begin{equation}
\bk_{\text{d},j}(\bx,t) =
\begin{bmatrix}
  \kappa_j\!\bigl(x_{\mathrm{d},j}(t) - x_j\bigr) + \dot{x}_{\mathrm{d},j}(t), \ 0
  \end{bmatrix}^\top,
\end{equation}
where $x_{\mathrm{d},j}(t) = A_j \sin(\omega_j t)$ is the desired patrolling trajectory, with $\kappa_j, A_j, \omega_j > 0$ denoting the proportional gain, motion amplitude, and patrolling frequency, respectively. Here, heterogeneity among the agents is abstracted by varying their patrolling frequency, reflecting their differences in capability.

To ensure robust monitoring against errors in sensor measurements, agent failures, etc., we require at least four agents to remain within each region $L_1$ and $L_2$ at all times. To enforce such requirement using Boolean compositions, we would need to consider a total of ${\sum_{j=4}^{11}{11 \choose j}\sum_{i=1}^2{2 \choose i}=1816\times 3 = 5448}$ combinations, which would not be computationally practical. Therefore, we define the following functions based on the proposed method:
\begin{equation}
\begin{aligned}
h_{L_1}(\bx_j) &= \min\Big\{\min\{h_k(\bx_j)\}_{k=1}^4,~\max \{h_k(\bx_j)\}_{k=5}^6\Big\}, \\
h_{L_2}(\bx_j) &= \min\Big\{\min\{h_k(\bx_j)\}_{k=7}^{10},~\max \{h_k(\bx_j)\}_{k=11}^{12}\Big\},\\
h(\bx) &= \min\Big\{\max^4\{h_{L_1}(\bx_j)\}_{j=1}^{11},~\max^4\{h_{L_2}(\bx_j)\}_{j=1}^{11}\Big\}. \label{eq:ccbf}
\end{aligned}
\end{equation}

\begin{figure}
    \centering
\includegraphics[width=\linewidth]{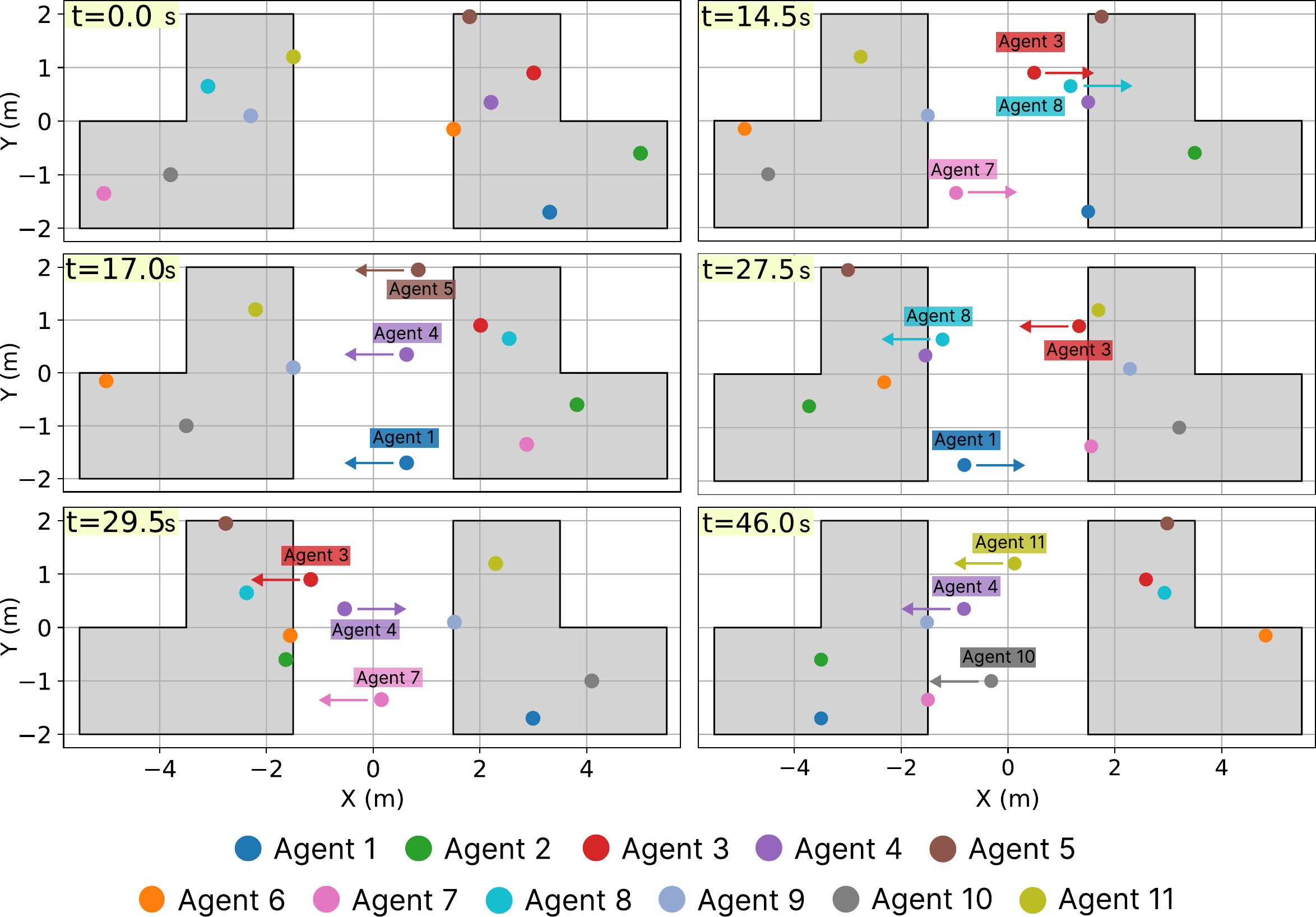}
    \caption{Simulation of a multi-agent patrolling problem where each L-shaped region must be monitored by at least four agents at all times.
    Using the proposed CBF-QP~\eqref{eq:pCr-CBF-QP}, this combinatorial safety constraint is ensured in a computationally tractable manner. The code and animation are available at \protect\url{https://github.com/joonlee16/combinatorial_cbf}.}
    \label{fig:L-shape}
    \vspace*{-5mm}
\end{figure}
Here, $h_{L_1}$ and $h_{L_2}$ characterize the two L-shaped regions using~\Cref{eq:two_nested_CBF} from~\Cref{ex:L-shape} (a 2-choose-2 combination of 4-choose-4 and 2-choose-1). If ${h_{L_1}(\bx_j)\geq 0}$ or ${h_{L_2}(\bx_j)\geq 0}$, then agent $j$'s position resides within the $L_1$ or $L_2$ region, respectively. Using 11 instances of $h_{L_1}$ and $h_{L_2}$ (one per agent), we construct the combinatorial CBF $h$ as given in~\Cref{eq:ccbf} (a 2-choose-2 combination of 11-choose-4). The condition ${h(\bx)\geq 0}$ ensures that at least four agents are in each region simultaneously. By Theorem~\ref{thm:nested_logic}, we can enforce safety through the CBF-QP~\eqref{eq:pCr-CBF-QP} with only ${p=132}$ constraints. 

We simulated the motions of the agents using the dynamics~\eqref{eq:single_integrator}, the CBF-QP~\eqref{eq:pCr-CBF-QP}, and the $p$-choose-$r$ CBF~\eqref{eq:ccbf}. Snapshots for ${t\in[0,50]}$ are shown in Fig.~\ref{fig:L-shape}, with colored arrows indicating agents moving into and out of the L-shaped regions. Because each region must always be covered by at least four agents, at most three agents can be outside both regions at any time. As shown in the figure, this constraint is satisfied throughout the simulation.

\section{conclusion}
We have developed a framework for the logical composition of CBFs and MCBFs that extends beyond simple AND formulations. The proposed combinatorial CBFs not only address $p$-choose-$r$ safety constraints but also their nested logical combinations. Similar to the standard AND case, our formulation enforces safety through multiple inequalities, thereby avoiding the nonsmoothness issues associated with logical operations. The key idea is to correctly identify the safety function via a sorting-based argument and utilize it as a pivot in the barrier conditions. Our future work will explore how to handle the feasibility issue arising when CBF conditions must hold outside their original safe sets and the compatibility issue arising in the intersection of their sets.

\bibliographystyle{ieeetr}
\bibliography{
    bib/alias,
    bib/PO,
    bib/main-Pio
}
\end{document}